\documentclass{article}

\usepackage[preprint]{neurips_2026}

\usepackage[utf8]{inputenc}
\usepackage[T1]{fontenc}
\usepackage{url}
\usepackage{booktabs}
\usepackage{amsfonts}
\usepackage{nicefrac}
\usepackage{microtype}
\usepackage{xcolor}
\usepackage{graphicx}
\usepackage{amsmath, amssymb, amsthm}
\usepackage{enumitem}
\usepackage{placeins}
\usepackage{algorithm}
\usepackage{algpseudocodex}
\usepackage{hyperref}

\newtheorem{definition}{Definition}
\newtheorem{proposition}{Proposition}

\newcommand{\tuple}[1]{\ensuremath{\langle{#1}\rangle}}

\usepackage[textsize=tiny,
    ]{todonotes}

\usepackage[normalem]{ulem}

\newcommand{\AgentSet}{A}
\newcommand{\NumAgents}{N}
\newcommand{\StateSpace}{\mathcal{S}}
\newcommand{\ActionSpace}[1]{\mathcal{A}^{#1}}
\newcommand{\TransFunc}{\mathcal{T}}
\newcommand{\RewardFn}{r}
\newcommand{\Discount}{\gamma}

\newcommand{\PartitionVar}{P}
\newcommand{\PartitionSet}{\mathcal{P}}
\newcommand{\NumTeams}{K}
\newcommand{\TeamVar}[1]{T_{#1}}
\newcommand{\GoalVar}[1]{g_{#1}}
\newcommand{\GoalSpace}[1]{\mathcal{G}_{#1}}
\newcommand{\CandidateGoalSet}{\mathcal{G}}
\newcommand{\HypothesisVar}{h}
\newcommand{\DomainModel}{D}

\newcommand{\ObsSeq}[1]{O_{1:#1}}
\newcommand{\StateVar}{s}
\newcommand{\JointActionVar}{a}
\newcommand{\Policy}{\pi}
\newcommand{\PolicyParams}{\theta}
\newcommand{\PolicyTheta}{\Policy_{\PolicyParams}}
\newcommand{\ActCtx}{c}

\newcommand{\CounterObs}[3]{o_{#1}^{#2}\!\left(#3\right)}
\newcommand{\ScoreFn}[2]{S_{#1}\!\left(#2\right)}
\newcommand{\UpperBoundFn}[2]{U_{#1}\!\left(#2\right)}
\newcommand{\StaleUpperBoundFn}[2]{\widetilde{U}_{#1}\!\left(#2\right)}
\newcommand{\Floor}{F}
\newcommand{\TopKHeap}{\mathcal{H}}
\newcommand{\RankLimit}{k}

\newcommand{\GoalTupleQueue}{\mathcal{Q}_{\mathrm{goal}}}

\newcommand{\TermPenalty}{\lambda_{\mathrm{term}}}
\newcommand{\ScoreEps}{\varepsilon_{\mathrm{score}}}

\newcommand{\StateFeats}[1]{\phi_{\texttt{state}}\!\left(#1\right)}

\newcommand{\GoalEmbed}[1]{\phi_{\texttt{goal}}\!\left(#1\right)}
\newcommand{\SelfEmbed}[1]{e_{\texttt{self}}\!\left(#1\right)}
\newcommand{\TeamEmbed}[1]{e_{\texttt{team}}\!\left(#1\right)}

\title{Multi-Agent Goal Recognition with Team- and Goal-Conditioned Reinforcement Learning and Factorized Branch-and-Bound}

\author{%
  Thiago Thomas \\
  Pontif\'{i}cia Universidade Cat\'{o}lica do Rio Grande do Sul (PUCRS) \\
  \And
  Gabriel de Oliveira Ramos \\
  Universidade do Vale do Rio dos Sinos (Unisinos) \\
  \And
  Felipe Meneguzzi \\
  University of Aberdeen \\
}

\begin{document}

\maketitle

\begin{abstract}
Multi-agent goal recognition asks an observer to jointly infer which agents act together and what each team is trying to achieve, so the hypothesis space grows combinatorially with the number of team partitions and goals per team.
Real applications such as drone surveillance and collaborative robotics expose only the agents' trajectory, which forces the observer to rank team-goal hypotheses from behavior alone.
Multi-Agent Goal Recognition with Branch-and-Bound (\emph{MAGR-BB}) addresses this setting with a shared team- and goal-conditioned policy used as the scoring model inside a factorized branch-and-bound search.
On a controlled multi-agent Blocksworld benchmark, MAGR-BB returns the same top-ranked hypothesis as exhaustive search throughout the trajectory while cutting hypothesis materialization by orders of magnitude and reducing cumulative recognition runtime substantially.
\end{abstract}

\section{Introduction}
\label{sec:introduction}
Goal recognition infers hidden intent from observed behavior and supports assistive robotics, security, and human-AI teaming~\citep{kautz1986generalized,demiris2007robotics,sukthankar2014plan,mirsky2021introduction,meneguzzi2021survey}.
In multi-agent settings, intent is shared across actors: an observer sees several agents moving simultaneously and must infer both which agents coordinate and what goal each resulting team pursues.
This distinction matters for drone-attack recognition~\citep{abdelkader2021aerial}, cybersecurity~\citep{mirsky2019cybersecurity}, human intent recognition~\citep{singh2020application}, autonomous driving~\citep{albrecht2021autonomousDriving}, and multi-robot coordination~\citep{farinelli2004multirobot}.
For example, two drones flying toward adjacent buildings may be surveying the area together, pursuing two separate goals, or coordinating an attack.

Most goal recognizers still target a single actor.
Planning-based recognizers~\citep{ramirez2009plan,ramirez2010probabilistic,pereira2020landmark} and the single-agent model-free line, including GRAQL~\citep{amado2022graql}, the deep-RL recognizer of \citet{Fang2023}, and GRNet~\citep{grNet}, all infer one actor's hidden goal at a time.
The multi-agent recognizers closest to our setting use explicit plan libraries, action models, or BDI intention structures~\citep{banerjee2010multiagent,zhuo2012action,argenta2015maprap,argenta2016probabilistic,zhuo2019recognizing,michaeldann2023multiagent}.
To the best of our knowledge, recognition with learned behavior policies therefore lacks a method that recognizes hidden teams and team goals together.

The problem is combinatorial.
An observer must jointly infer who coordinates with whom and what each team is trying to achieve.
\citet{banerjee2010multiagent} show that multi-agent plan recognition is NP-complete when the number of agents is variable.
Let $\PartitionSet$ be the candidate partitions of agents into $\NumTeams$ labeled team slots, and let $\GoalSpace{k}$ be the candidate-goal library for slot $k$.
An exhaustive baseline then enumerates $\sum_{\PartitionVar\in\PartitionSet}\prod_{k=1}^{\NumTeams}|\GoalSpace{k}|$ complete partition-goal hypotheses at each observed step.
In our benchmark, this count is $7{,}154{,}784$ at each observed step.

In this paper, we introduce MAGR-BB, a model-free recognizer for jointly inferring hidden team partitions and team goals from a fully observed joint trajectory.
MAGR-BB combines the single-agent learned-policy scoring of GRAQL~\citep{amado2022graql} with partition-level bounds inspired by coalition-structure search~\citep{rahwan2007coalition} under a \emph{non-competitive} score: each team-goal score depends only on its own team and goal, independent of the goals assigned to other teams in the partition.
Because team scores add to the score of a complete hypothesis, the recognizer caches local team-goal scores, prunes partitions whose best total cannot beat the lowest score in the current top-$k$ list, and preserves the exhaustive top-$k$ ranking without constructing every complete hypothesis.

\noindent\textbf{Contributions:}
\begin{enumerate}
  \item We formalize multi-agent goal recognition as inferring which agents form teams and which goal each team pursues from a fully observed joint trajectory (Section~\ref{sec:method:problem}).
  \item We train a single Transformer policy~\citep{vaswani2017attention} conditioned on a candidate goal and candidate team to score any candidate team-goal pair (Section~\ref{sec:method:policy}), extending the per-goal model in GRAQL~\citep{amado2022graql} to a single shared network across all hypotheses.
  \item We derive MAGR-BB, a branch-and-bound recognizer whose team-level bounds over partitions and goal assignments preserve the complete-hypothesis ranking under the non-competitive scoring condition (Section~\ref{sec:method:recognition}).
  \item We empirically show on a controlled multi-agent Blocksworld benchmark~\citep{slaney2001blocksworld} that MAGR-BB returns the same top-ranked hypothesis at every observed step and the same final top-$10$ list as the exhaustive baseline, builds only $10$ complete hypotheses rather than $7.15$M at the final observed step, and cuts cumulative runtime by a factor of $2.43$--$2.91$ (Section~\ref{sec:experiments}).
\end{enumerate}

\section{Preliminaries}
\label{sec:preliminaries}
\subsection{Multi-Agent Decision Processes}
\label{sec:preliminaries:madp}

We use the standard cooperative multi-agent MDP tuple $\tuple{\StateSpace,\{\ActionSpace{i}\}_{i=1}^{\NumAgents},\TransFunc,\RewardFn,\Discount}$, where $\StateSpace$ is the state space, $\ActionSpace{i}$ is agent $i$'s action space, $\TransFunc$ is the transition function, $\RewardFn$ is the shared team reward, and $\Discount$ is the discount factor.
A joint action $\JointActionVar=(\JointActionVar^1,\dots,\JointActionVar^\NumAgents)$ drives the shared environment~\citep{albrecht2024autonomous}.
We train under centralized training with decentralized execution (CTDE), where the critic observes the joint state during learning while each actor uses only its local observation at execution time~\citep{lowe2017maddpg,albrecht2024autonomous}.
At recognition time, the observer receives fully observed joint state-action pairs; we leave partial observability for future work.

\subsection{Goal-Conditioned Reinforcement Learning}
\label{sec:preliminaries:rl}

We train a goal-conditioned policy~\citep{schaul2015uvfa,liu2022goalconditioned} with PPO~\citep{schulman2017ppo}, which optimizes policy-gradient updates through a clipped surrogate objective.
The candidate goal enters the policy input, and we train under a length-based curriculum~\citep{bengio2009curriculum} that gradually increases goal length and difficulty.

\subsection{Plan and Goal Recognition}
\label{sec:preliminaries:planning}

A classical planning problem consists of a state space, action schemas with preconditions and effects, an initial state, and a goal condition~\citep{ghallab2004planning,geffner2013planningbook}.
Plan and goal recognition invert plan generation by mapping observed behavior back to a likely plan or goal~\citep{kautz1986generalized,sukthankar2014plan,mirsky2021introduction,meneguzzi2021survey}.
Given a domain model $\DomainModel$, candidate goals $\CandidateGoalSet$, and observations $\ObsSeq{t}$, a recognizer produces a ranking over $\CandidateGoalSet$.
Existing families include plan-library-based methods~\citep{geib2009plan}, planning-based methods~\citep{ramirez2009plan,ramirez2010probabilistic,pereira2020landmark}, and model-free methods that replace the domain model with learned value functions, policies, or neural predictors~\citep{amado2022graql,Fang2023,grNet,amadosurveymodelfree}.

\subsection{Branch-and-Bound}
\label{sec:preliminaries:bnb}

Branch-and-bound is a combinatorial search framework that alternates between branching over subproblems and pruning any whose admissible bound cannot beat the current incumbent~\citep{lawler1966branch}.
Every pruned subtree has an upper bound no larger than the best retained solution, so the procedure returns the same top-ranked solutions as exhaustive search.

\section{MAGR-BB Method}
\label{sec:method}

This section first formalizes the recognition problem.
It then defines the learned policy, the scoring rule, and the branch-and-bound recognizer.

\subsection{Problem Formulation}
\label{sec:method:problem}

Let $\AgentSet=\{1,\dots,\NumAgents\}$ be the set of agents.
Let $\PartitionVar=\{\TeamVar{1},\dots,\TeamVar{\NumTeams}\}$ be a hidden indexed partition of those agents into $\NumTeams$ teams, with the index $k$ acting as a team-slot label that distinguishes otherwise equivalent partitions.
The teams are disjoint and satisfy $\bigcup_k \TeamVar{k}=\AgentSet$.
For each labeled team slot $k$, $\GoalSpace{k}$ is a finite set of candidate goal specifications that the policy can condition on.
The hidden team goal $\GoalVar{k}$ for team $\TeamVar{k}$ is one element of $\GoalSpace{k}$.
A complete recognition hypothesis is therefore
\[
\HypothesisVar = (\PartitionVar, \GoalVar{1},\dots,\GoalVar{\NumTeams}).
\]
We assume full observability of the recorded trajectory: at each step, the observer sees the world state and the executed joint action.
After the first $t$ joint steps, the observer has the partial trajectory
\[
\ObsSeq{t} = \{(\StateVar_\tau, \JointActionVar_\tau)\}_{\tau=1}^{t},
\]
where $\StateVar_\tau$ is the world snapshot at step $\tau$ and $\JointActionVar_\tau = (\JointActionVar_\tau^1,\dots,\JointActionVar_\tau^\NumAgents)$ is the executed joint action.
The online input from the environment is only this sequence.
The recognizer also has access to the learned policy $\PolicyTheta$ trained offline by MAGR-BB and uses it as a learned scoring model for candidate hypotheses.
We do not assume access to the agents' true private policies; $\PolicyTheta$ is the learned scoring model described in Section~\ref{sec:method:policy}.
The recognition task is to maintain the top-ranked hypotheses that best explain $\ObsSeq{t}$ as more steps are observed, ranked by a learned-policy score defined in Section~\ref{sec:method:scoring}.

\subsection{Goal- and Team-Conditioned Multi-Agent Policy}
\label{sec:method:policy}

The conditioning enters through an observation builder for candidate teams and goals.
The builder concatenates four fixed-dimensional feature blocks: domain-state features, self-identity features, team-membership features, and goal features.
Given a world snapshot, an acting agent $i$, a candidate team $T$, and a candidate team goal $g$, it constructs
\[
\CounterObs{t}{i}{T,g} =
\big[
\StateFeats{\StateVar_t},
\SelfEmbed{i},
\TeamEmbed{T},
\GoalEmbed{g}
\big].
\]
The first term summarizes the current world snapshot; the next two identify the acting agent and the candidate team via a binary team mask; the last encodes the candidate goal.
We call $\CounterObs{t}{i}{T,g}$ a \emph{counterfactual observation} because it keeps the world snapshot fixed while substituting candidate team and goal features.
Section~\ref{sec:experiments:setup} reports the concrete Blocksworld instantiation.

The teammate-action context $\ActCtx_t^i$ records the actions already chosen by earlier teammates at the same joint step.
We use a single shared policy parameterized by $\PolicyParams$,
\[
\PolicyTheta(\JointActionVar_t^i \mid \CounterObs{t}{i}{T,g}, \ActCtx_t^i),
\]
where $T \subseteq \AgentSet$ is a candidate team containing agent $i$, and $g$ is a candidate goal for that team.
Here $\JointActionVar_t^i$ is agent $i$'s executed action from the recorded joint action, not an observer decision or a full joint action.
The policy is goal-conditioned because the candidate goal $g$ enters the network input, and team-conditioned because an explicit team-membership mask for $T$ also enters that input.
A single shared network therefore covers all candidate $(T,g)$ pairs, and recognition scores hypotheses by reusing that same policy for each required action-probability query.
This design differs from GRAQL~\citep{amado2022graql}, which learns a separate policy or value function for each candidate goal in the single-agent case.
Training and recognition use the same builder.
Training fills in the true team and goal sampled by the environment, while recognition fills in hypothesized teams and goals.

We implement the shared policy with a Transformer encoder~\citep{vaswani2017attention} that maps the counterfactual observation to a distribution over the domain action space.
Architectural choices, token layout, and hyperparameters are domain-specific and reported in Section~\ref{sec:experiments:setup}.

We model within-team coordination with \emph{team-autoregressive decoding}.
At each joint step, teammates act in a fixed within-team order (ascending agent index within each team).
When predicting agent $i$'s action, the policy receives $\ActCtx_t^i$.
This turns a team's same-step joint action into an ordered product of conditional action probabilities and lets recognition score each observed action while conditioning on earlier teammate actions.

We train the actor with PPO~\citep{schulman2017ppo} under CTDE: during training only, the critic receives centralized information, while recognition later queries the learned actor as the scoring model.
The curriculum is stage-based: each stage fixes a goal-length range and an initial-state scrambling level, and later stages use longer goals and harder scrambles to handle sparse reward in the longer-goal regime.
Section~\ref{sec:experiments:setup} reports the exact schedule.

\subsection{Counterfactual Scoring}
\label{sec:method:scoring}

Counterfactual scoring evaluates how well a candidate team-goal hypothesis explains the recorded actions.
For a recorded state-action pair $(\StateVar_\tau,\JointActionVar_\tau)$, recognition keeps the observed world state and executed joint action unchanged and varies only the hypothesized team and goal used to build the policy input.

A concrete example illustrates the operation.
In the multi-agent Blocksworld benchmark of Section~\ref{sec:experiments}, suppose agents $1$ and $2$ both move at the same joint step, with agent $1$ clearing block $b$ and agent $2$ holding block $a$.
To score the hypothesis ``agents $1$ and $2$ form one team building the stack $a$ on $b$'' for agent $2$, we set agent $2$'s team mask to mark both agents $1$ and $2$ as members of the candidate team, insert agent $1$'s observed action into the teammate-action context because agent $1$ is earlier in the within-team order, and encode $a$-on-$b$ in the goal features.
We then query $\PolicyTheta$ for the probability of the action that agent $2$ actually executed.
A different hypothesis, for example ``agents $2$ and $3$ form a team building $a$ on $b$'', changes only the policy input used for scoring: the team mask, the teammate-action context, and the goal encoding.
The observed world state and the executed actions stay fixed.

For a given hypothesis $\HypothesisVar$, let $\CounterObs{\tau}{i}{\HypothesisVar}$ denote the policy input built by keeping the recorded state $\StateVar_\tau$ fixed and inserting the team and goal features specified by $\HypothesisVar$ for agent $i$.
Let $\ActCtx_\tau^i(\HypothesisVar)$ denote the same-step teammate-action context formed by selecting, from the recorded joint action $\JointActionVar_\tau$, the actions of agents that $\HypothesisVar$ places in the same team as $i$ and earlier in the fixed within-team order.
Thus $\HypothesisVar$ changes which recorded actions enter the context, but it does not change the recorded actions themselves.
Some candidate hypotheses make the recorded action infeasible in the recorded state.
In Blocksworld, for example, a hypothesis can assign agent $i$ to a goal whose block subset does not contain the block used by the recorded action.
For feasible recorded actions, we score the action by the clipped log-likelihood under the policy:
\begin{equation}
\label{eq:score}
\ScoreFn{t}{\HypothesisVar} =
\sum_{\tau=1}^{t}\sum_{i=1}^{\NumAgents}
\log \max\!\Big(
\PolicyTheta\big(\JointActionVar_\tau^i \mid \CounterObs{\tau}{i}{\HypothesisVar}, \ActCtx_\tau^i(\HypothesisVar)\big),
\ScoreEps
\Big),
\end{equation}
where $\ScoreEps>0$ is a small constant that prevents undefined logarithms when the policy assigns zero probability to the executed action.
We call $\ScoreFn{t}{\HypothesisVar}$ in Eq.~\ref{eq:score} the \emph{hypothesis score} of $\HypothesisVar$ at time $t$, and use this single name throughout the paper.
If the executed action is infeasible under a candidate hypothesis's action constraints, the corresponding term contributes $\log \ScoreEps$ rather than $-\infty$, and the hypothesis is heavily but finitely penalized.
The likelihood terms score how well a hypothesis explains the observed actions.

\begin{definition}[Non-competitive score]
\label{def:noncompetitive}
For a fixed observed trajectory and partition $\PartitionVar$, a learned-policy score is non-competitive if each agent-level term for $i\in\TeamVar{k}$ and any terminal penalty depend on $\TeamVar{k}$ and $\GoalVar{k}$, but not on other teams' candidate goals.
This is a score condition, not a requirement that the physical domain use disjoint workspaces.
\end{definition}

The observer still ranks complete hypotheses that assign every agent to a team and one goal per team; the condition only requires that each team's likelihood contribution be scored without reading the candidate goals of the other teams.
At the final observed state, we also apply a finite terminal-consistency penalty: if the trajectory terminates without truncation and a candidate team goal is not satisfied, we subtract $\TermPenalty>0$ once from that team's local score.
This penalty does not replace the action likelihood; it adds a finite correction against hypotheses whose predicted goal was never achieved.
The values of $\ScoreEps$ and $\TermPenalty$ are reported in Section~\ref{sec:experiments:setup}.
The next subsection groups these likelihood terms into the table entries that branch-and-bound reuses and prunes.

\subsection{Factorized Branch-and-Bound Recognition}
\label{sec:method:recognition}

At recognition time, we do not rescore every complete partition-goal hypothesis from scratch after each new observation.
Instead, we maintain cumulative scores for local team-goal pairs and reuse them across all complete hypotheses that contain that pair.
For a partition $\PartitionVar=\{\TeamVar{1},\dots,\TeamVar{\NumTeams}\}$, a team index $k\in\{1,\dots,\NumTeams\}$, and a candidate goal $g\in\GoalSpace{k}$ for team $\TeamVar{k}$, we define the local team-goal score as
\begin{equation}
\label{eq:local_score}
\ScoreFn{t}{\TeamVar{k}, g} =
\sum_{\tau=1}^{t}\sum_{i\in \TeamVar{k}}
\log \max\!\Big(
\PolicyTheta\big(\JointActionVar_\tau^i \mid \CounterObs{\tau}{i}{\TeamVar{k}, g}, \ActCtx_\tau^i(\TeamVar{k},g)\big),
\ScoreEps
\Big).
\end{equation}
The inner sum ranges over the agents assigned to team $\TeamVar{k}$, and the context term $\ActCtx_\tau^i(\TeamVar{k},g)$ uses the same within-team order as the policy.
If the trajectory has terminated without truncation and goal $g$ is unsatisfied in the final state, we subtract $\TermPenalty$ once from this local score.
Under Definition~\ref{def:noncompetitive}, every term in Eq.~\ref{eq:score} that contains agent $i\in\TeamVar{k}$ depends only on $\TeamVar{k}$ and $\GoalVar{k}$, so grouping the terms by team yields the additive decomposition
\begin{equation}
\label{eq:additive}
\ScoreFn{t}{\PartitionVar,\GoalVar{1},\dots,\GoalVar{\NumTeams}}
= \sum_{k=1}^{\NumTeams} \ScoreFn{t}{\TeamVar{k}, \GoalVar{k}}.
\end{equation}
Eq.~\ref{eq:additive} is the central property: the cumulative score of a complete partition-goal hypothesis equals the sum of its team-goal local scores.
For a fixed partition, the recognizer stores one local score for each team slot and candidate goal.
When a new observation arrives, a scoring-stage bound can skip an entire partition.
For each partition that survives this test, the recognizer refreshes all local team-goal rows for that partition before computing its current bound.
Thus the method still scores local team-goal pairs for surviving partitions; it avoids constructing the Cartesian product of all team-goal choices unless those complete hypotheses can still enter the top-$k$ ranking.
At each observed step, the online procedure rebuilds a global min-heap of the current top-$k$ complete hypotheses from the current local-score table.
Once that heap is full, its smallest retained score is the floor that admissible bounds compete against.

For any partition $\PartitionVar$ whose local rows are available at step $t$, the partition upper bound is
\begin{equation}
\label{eq:partition_bound}
\UpperBoundFn{t}{\PartitionVar}=\sum_{k=1}^{\NumTeams}\max_{g\in \GoalSpace{k}} \ScoreFn{t}{\TeamVar{k}, g},
\end{equation}
which selects, for each team slot, the largest refreshed local score under that partition.
Because each per-slot maximum dominates any specific choice of $\GoalVar{k}$, the value $\UpperBoundFn{t}{\PartitionVar}$ upper-bounds every complete hypothesis consistent with $\PartitionVar$, by Eq.~\ref{eq:additive}.

Inside one surviving partition, we still need to search combinations that choose one goal per team.
We sort each team's candidate goals by local score in descending order, paying a per-partition sorting cost over the refreshed local rows.
These sorted lists do not enumerate complete hypotheses; each is a per-team list used to generate complete hypotheses in best-first order.
The goal at index $j_k=1$ in team $k$ is the team's currently best candidate.
An index tuple $(j_1,\dots,j_{\NumTeams})$ then denotes the complete hypothesis that uses the $j_k$th goal in team $k$'s sorted list, and its score is the sum of the selected local scores.
Increasing any coordinate $j_k$ moves down team $k$'s sorted list, so the local score of that team can only stay the same or decrease, and so can the sum.
A best-first max-heap over index tuples therefore admissibly searches the goal-combination grid: the heap top is always the best unexplored complete hypothesis, and each emitted tuple creates only its one-coordinate successors.
Once the heap top falls at or below the current floor, no remaining tuple in this partition can enter the ranking.

Algorithm~\ref{alg:magrbb} summarizes the online procedure: a scoring-stage prune skips refresh when the stale partition bound cannot beat the floor, a partition-level prune drops dominated partitions after refresh, and a local-level prune halts best-first tuple expansion once the heap top falls at or below the floor.
We say that the recognizer \emph{emits} a candidate whenever it constructs one complete partition-goal hypothesis to compare against the heap.
Stale bounds come from the current local table (zero before any observation), are replaced after refresh, and remain valid because refresh only decreases local scores.

\begin{algorithm}[!t]
\small
\setlength{\baselineskip}{0.95\baselineskip}
\algrenewcommand{\algorithmicindent}{1em}
\caption{MAGR-BB online recognition (Full B\&B variant).}
\label{alg:magrbb}
\begin{algorithmic}[1]
\Require Candidate partitions $\PartitionSet$, candidate goal libraries $\{\GoalSpace{k}\}$, top-$k$ size $\RankLimit$, observed trajectory arriving step by step.
\State Initialize $\ScoreFn{0}{\TeamVar{},g}=0$ and set stale bounds by Eq.~\ref{eq:partition_bound}.
\For{each new joint step $(\StateVar_t,\JointActionVar_t)$}
  \State Initialize empty min-heap $\TopKHeap$ and set $\Floor\gets-\infty$ for this step.
  \State Sort partitions in $\PartitionSet$ by stale upper bound $\StaleUpperBoundFn{t}{\PartitionVar}$ in descending order.
  \For{each partition $\PartitionVar$ in sorted order}
    \If{\label{algline:stale-test}$\TopKHeap$ is not full \textbf{or} $\StaleUpperBoundFn{t}{\PartitionVar} > \Floor$}
      \State \label{algline:refresh-scores}Refresh all $\ScoreFn{t}{\TeamVar{k},g}$ for this partition using Eq.~\ref{eq:local_score}.
      \State \label{algline:partition-bound}Compute $\UpperBoundFn{t}{\PartitionVar}=\sum_k\max_{g} \ScoreFn{t}{\TeamVar{k},g}$.
      \If{\label{algline:partition-test}$\TopKHeap$ is not full \textbf{or} $\UpperBoundFn{t}{\PartitionVar} > \Floor$}
        \State \label{algline:init-tuples}Sort each team's goals by local score, initialize $\GoalTupleQueue$ with $(1,\dots,1)$, and initialize $\textit{seen}\gets\{(1,\dots,1)\}$ (local to this partition iteration).
        \State $\textit{emitted}\gets0$.
        \While{\label{algline:local-loop}$\GoalTupleQueue$ is non-empty, $\textit{emitted}<\RankLimit$, and either $\TopKHeap$ is not full or the best tuple bound in $\GoalTupleQueue$ exceeds $\Floor$}
          \State \label{algline:emit-tuple}Pop the best tuple, emit its complete hypothesis, and insert it into $\TopKHeap$ if it belongs in the current top-$k$ set.
          \State \label{algline:expand-tuples}For each one-coordinate successor $\sigma$ of the popped tuple with $\sigma\notin\textit{seen}$, add $\sigma$ to $\textit{seen}$ and push $\sigma$ onto $\GoalTupleQueue$.
          \State Increment $\textit{emitted}$.
        \EndWhile
      \EndIf
    \EndIf
    \State If $\TopKHeap$ is full, update $\Floor$ to the current smallest score in $\TopKHeap$.
  \EndFor
  \State Store each refreshed $\UpperBoundFn{t}{\PartitionVar}$ as that partition's stale bound for later observed steps.
  \State Output $\TopKHeap$ as the ranking for step $t$.
\EndFor
\State \Return the latest ranking.
\end{algorithmic}
\end{algorithm}

\begin{proposition}[Additive score and admissible pruning]
\label{prop:additive_pruning}
Under Definition~\ref{def:noncompetitive}, Eq.~\ref{eq:additive} holds for every complete hypothesis.
The scoring-stage, partition-level, and local-level pruning tests in Algorithm~\ref{alg:magrbb} cannot discard a complete hypothesis whose score exceeds the current top-$k$ floor.
\end{proposition}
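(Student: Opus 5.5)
The proof has two parts: the additive decomposition, and then the three pruning tests, each of which will be reduced to Eq.~\ref{eq:additive} together with a monotonicity fact.

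\textbf{Step 1: Eq.~\ref{eq:additive}.} Fix a complete hypothesis $\HypothesisVar=(\PartitionVar,\GoalVar{1},\dots,\GoalVar{\NumTeams})$. The teams $\TeamVar{k}$ are disjoint and cover $\AgentSet$, so the inner sum over $i\in\AgentSet$ in Eq.~\ref{eq:score} regroups, term by term, into $\sum_k\sum_{i\in\TeamVar{k}}$. For $i\in\TeamVar{k}$ we then compare the summand of Eq.~\ref{eq:score} with that of Eq.~\ref{eq:local_score}:
\begin{itemize}
\item \emph{Observation.} By construction, $\CounterObs{\tau}{i}{\HypothesisVar}$ inserts only agent $i$'s team mask $\TeamVar{k}$ and its goal $\GoalVar{k}$. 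Hence $\CounterObs{\tau}{i}{\HypothesisVar}=\CounterObs{\tau}{i}{\TeamVar{k},\GoalVar{k}}$.
\item \emph{Context.} The context $\ActCtx_\tau^i(\HypothesisVar)$ selects recorded actions of earlier members of $\TeamVar{k}$. It therefore coincides with $\ActCtx_\tau^i(\TeamVar{k},\GoalVar{k})$.
\item \emph{Terminal penalty.} Definition~\ref{def:noncompetitive} guarantees that the terminal penalty is charged per team and depends only on $(\TeamVar{k},\GoalVar{k})$. So it moves into the $k$-th local score, where Eq.~\ref{eq:local_score} already includes it.
\end{itemize}
Summing over $k$ gives Eq.~\ref{eq:additive}.

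\textbf{Step 2: two monotonicity facts.}
\begin{itemize}
\item \emph{Fresh bound is admissible.} By Eq.~\ref{eq:additive}, every hypothesis with partition $\PartitionVar$ has score $\sum_k \ScoreFn{t}{\TeamVar{k},\GoalVar{k}}\le \UpperBoundFn{t}{\PartitionVar}$, since each summand is at most its slot maximum.
\item \emph{Stale bound is admissible.} Each refresh appends terms $\log\max(\cdot,\ScoreEps)\le \log 1=0$, and possibly the nonnegative penalty is subtracted. So local scores are nonincreasing in $t$.
\end{itemize}
The second fact is the step I expect to need the most care. The stale bound $\StaleUpperBoundFn{t}{\PartitionVar}$ is either $0$ (initial) or $\UpperBoundFn{t'}{\PartitionVar}$ for some $t'<t$. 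In both cases it dominates $\UpperBoundFn{t}{\PartitionVar}$, because a maximum of nonincreasing functions is nonincreasing.

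One subtlety must be handled explicitly. The terminal penalty is applied "once" at termination, so I would argue that it is subtracted, never re-added, and hence still only decreases scores. I would also state that a partition skipped at step $t'$ keeps a stale bound from an even earlier step, which remains valid by transitivity.

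\textbf{Step 3: the three tests.} Throughout, the floor $\Floor$ is the minimum of the current full heap, and it only increases within a step. We may therefore argue against the floor at the time of each test: any hypothesis that exceeds the final floor also exceeds the floor in force earlier.
\begin{itemize}
\item \emph{Scoring-stage and partition-level tests.} These discard $\PartitionVar$ only when the heap is full and the bound is at most $\Floor$. By Step 2, every hypothesis in $\PartitionVar$ then scores at most $\Floor$.
\item \emph{Local-level loop, floor condition.} Sorting makes the tuple score monotone nonincreasing in each coordinate. Every unexplored tuple is reachable by a chain of one-coordinate successors from some tuple still in $\GoalTupleQueue$. This is the standard frontier invariant, which I would prove by induction using the \textit{seen} set. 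Hence every unexplored tuple scores at most the queue top, and stopping when the top is at most $\Floor$ loses nothing.
\item \emph{Local-level loop, cap $\textit{emitted}<\RankLimit$.} The emitted tuples come out in nonincreasing score order. So after $\RankLimit$ emissions, any remaining tuple is dominated by $\RankLimit$ hypotheses already offered to the heap. Its score cannot exceed the $\RankLimit$-th best and hence cannot exceed the floor, up to ties, which I would treat as not "exceeding."
\end{itemize}
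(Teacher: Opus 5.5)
Your proposal is correct and follows essentially the same route as the paper's proof. Both regroup Eq.~\ref{eq:score} over the disjoint teams to obtain Eq.~\ref{eq:additive}, bound each partition by its per-slot maxima, justify stale bounds through the zero initialization and the nonpositive refresh increments, and use the non-increasing sorted goal lists for the local prune. Your version is more complete than the paper's sketch: you also prove the frontier invariant via the $\textit{seen}$ set, and you handle the $\textit{emitted}<\RankLimit$ cap, which the paper's proof leaves implicit. Your argument for the cap is right: any tuple left after $\RankLimit$ best-first emissions scores at most the $\RankLimit$-th emitted score, hence at most the resulting floor.
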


\begin{proof}
Definition~\ref{def:noncompetitive} lets us group Eq.~\ref{eq:score} by the disjoint teams, giving Eq.~\ref{eq:additive}.
For any partition, each $\max_{g\in\GoalSpace{k}}\ScoreFn{t}{\TeamVar{k},g}$ dominates any chosen $\GoalVar{k}$, so Eq.~\ref{eq:partition_bound} upper-bounds all complete hypotheses under that partition.
Zero-table bounds are valid, and refreshes append only terms $\le 0$, so stale bounds remain upper bounds.
Within a partition, sorted goal lists are non-increasing; once the best unexplored tuple is at or below the floor, no remaining tuple can enter the top-$k$ set.
\end{proof}

Since the score function and the local team-goal scores are the same in the exhaustive baseline and in every pruned variant, the pruned variants discard only hypotheses whose bound cannot exceed the current floor.
For steps with a strict score gap at the top-$k$ boundary, they return the same top-$k$ ranking as exhaustive search under the same score; if multiple hypotheses tie at that boundary, they return a score-equivalent ranking up to those tied hypotheses.

The factorized recognizer therefore spends its work on local team-goal tables and the admissible bounds that certify the same answer (Section~\ref{sec:experiments:ablations}).

\section{Experiments}
\label{sec:experiments}
\subsection{Experimental Setup}
\label{sec:experiments:setup}

At step $t$, the recognizer has consumed the first $t$ joint state-action pairs of the recorded trajectory.
We ask two questions.
First, whether the learned policy is accurate enough to serve as a behavior model for recognition.
If the policy assigns very low probability to the true actions even under the true team-goal hypothesis, every hypothesis sits near the same low-probability floor and the recognizer becomes uninterpretable.
Second, how much computation our pruning removes while preserving the reported top-ranked hypothesis and final top-$10$ list.
We measure recognition quality with final top-$1$ team accuracy, goal accuracy, joint accuracy, and identification latency.
We measure computation with cumulative runtime, score-table refreshes, partition visits, and goal-tuple emissions.

\noindent\textbf{Domain.}
Blocksworld is a classical planning domain consisting of blocks stacked into towers on a table; the planning problem is to transform an initial configuration into a target configuration by moving one block at a time~\citep{slaney2001blocksworld}.
The benchmark is a multi-agent Blocksworld environment instantiated with two hidden teams of size two.
Each team acts in its own seven-block workspace.
We use \emph{disjoint workspaces} as a controlled instantiation of Definition~\ref{def:noncompetitive}: they make the additive score factorization transparent and remove resource-conflict confounds, but they are not part of the MAGR-BB definition.
Goals are ordered \emph{stacks} of length 2--4: a stack of length $\ell$ is a sequence of $\ell$ blocks that must end up placed one on another in the specified order.
The $\ell-1$ adjacent on-relations define the visible tower.
The benchmark also stores each candidate as a full-state support assignment over all seven workspace blocks, so the unsatisfied-relation count checks that the bottom tower block and every block outside the tower remain on the table.
With seven blocks per workspace and stacks of length up to four, each team-slot goal library contains $|\GoalSpace{k}|=\sum_{\ell=2}^{4}7!/(7-\ell)!=42+210+840=1092$ candidate goals.
With six indexed partitions, exhaustive ranking considers $6\times 1092^2=7{,}154{,}784$ complete partition-goal hypotheses per observed step.

\noindent\textbf{Observation and action space.}
Each agent observation has dimension $166$ and concatenates domain-state features (on-relations, clear predicates, holding status), agent-identity features, a team-membership mask, and the goal feature that encodes the candidate stack as pairwise on-relations.
The discrete action space contains $99$ actions covering \texttt{noop}, \texttt{pickup}, \texttt{putdown}, \texttt{stack}, and \texttt{unstack} primitives over seven blocks.
The two team slots are labeled by workspace, so the three unordered pairings of four agents into two pairs yield six indexed partition hypotheses.

\noindent\textbf{Policy architecture and training.}
The shared policy is a Transformer encoder over seven block tokens, four agent tokens, and one learned \texttt{[CLS]} token, with two attention layers, four heads, and hidden size $256$.
The actor predicts the operation type from \texttt{[CLS]} and the block arguments from the block tokens, then assembles the two factors into logits over the $99$ actions.
We train with PPO under CTDE (clip $0.2$, learning rate $10^{-5}$, batch size $256$, four epochs per rollout, $\Discount=0.99$) using eight parallel environments and team-autoregressive decoding, under a six-stage length-and-difficulty curriculum with mastery gates.
The saved configuration uses CPU execution for training and recognition; the reported recognition ablations use no GPU.

\noindent\textbf{Recognition protocol.}
Recognition uses the score $\ScoreFn{t}{\HypothesisVar}$ defined in Eq.~\ref{eq:score} and retains the top-$10$ hypotheses after every observed step.
The score uses $\ScoreEps=10^{-10}$ for the log-likelihood clip and $\TermPenalty=2.0$ for the terminal inconsistency penalty.
We replay trajectories under five rollout seeds and four \emph{action-noise} levels $\{0.0,0.05,0.1,0.2\}$.
At each joint step, a Bernoulli draw with probability $p$ decides whether to perturb the step; if perturbed, each agent's action is sampled independently from that agent's current valid-action mask.
All reported recognition numbers average over the five trajectories at each noise level.

\noindent\textbf{Compared variants.}
We compare six variants of the same factorized recognizer.
\emph{Factorized Exhaustive} performs no pruning.
\emph{Scoring Prune}, \emph{Partition B\&B}, and \emph{Local B\&B} enable one pruning stage at a time.
\emph{Ranking B\&B} combines partition and local ranking bounds, and \emph{Full B\&B} enables all three mechanisms (Algorithm~\ref{alg:magrbb}).
Every variant uses the same local team-goal scores and the same complete-hypothesis score, so the comparison isolates computational savings under a fixed recognition objective.
We sum per-step runtime and search counters (score-table refreshes, partition visits, and goal-tuple emissions) over each trajectory before averaging over seeds, and we report the work at the final observed step separately.
Five trajectories per noise level is a small sample, so we treat these results as in-domain evidence rather than as a broad benchmark claim.

\subsection{Policy Performance}
\label{sec:experiments:policy}

The final checkpoint achieves $98.44\%$ episode success rate (both teams complete their goals) and $99.22\%$ per-team success rate, with success at or above $97.62\%$ across stack lengths $2$--$4$ and an average of $0.023$ unsatisfied goal relations per episode.
The true team-goal hypothesis therefore separates cleanly from wrong hypotheses, which lets the bounds in Section~\ref{sec:method:recognition} prune effectively.

\subsection{Recognition Results}
\label{sec:experiments:recognition}

All six variants agree on the top-$1$ hypothesis at every observed step and on the final top-$10$ list across all trajectories and noise levels, recovering the correct team partition and joint team-goal assignment in every condition (Table~\ref{tab:accuracy_noise}).
Team identity resolves after $1.6$ observed steps on average; full joint team-goal identification occurs after $6.4$--$7.2$ steps depending on noise.

\begin{table}[!htbp]
\centering
\small
\begin{tabular}{lccccccc}
\toprule
Noise & Team acc. & Goal acc. & Joint acc. & Team lat. & Goal lat. & Joint lat. & Mean steps \\
\midrule
0.00 & 1.00 & 1.00 & 1.00 & 1.6 & 6.4 & 6.4 & 7.0 \\
0.05 & 1.00 & 1.00 & 1.00 & 1.6 & 6.4 & 6.4 & 7.0 \\
0.10 & 1.00 & 1.00 & 1.00 & 1.6 & 6.8 & 6.8 & 7.4 \\
0.20 & 1.00 & 1.00 & 1.00 & 1.6 & 7.2 & 7.2 & 8.4 \\
\bottomrule
\end{tabular}
\caption{
Final top-$1$ accuracy and identification latency, identical across variants (one row per noise level).
``Lat.'' is the first observed step at which the top-ranked hypothesis becomes perfectly correct for the indicated target.
``Mean steps'' is the mean replayed trajectory length.
}
\label{tab:accuracy_noise}
\end{table}

All six variants follow the same per-step top-$1$ accuracy curve because their serialized assignments match at every observed step: branch-and-bound changes computation, not the top-ranked hypothesis.

\subsection{Search Burden and Ablations}
\label{sec:experiments:ablations}

Because the top-$1$ accuracy curves are identical across variants, the empirical question is where the computation goes.
We therefore focus on wall-clock runtime and cumulative workload patterns.

\begin{figure}[!htbp]
\centering
\includegraphics[width=\linewidth]{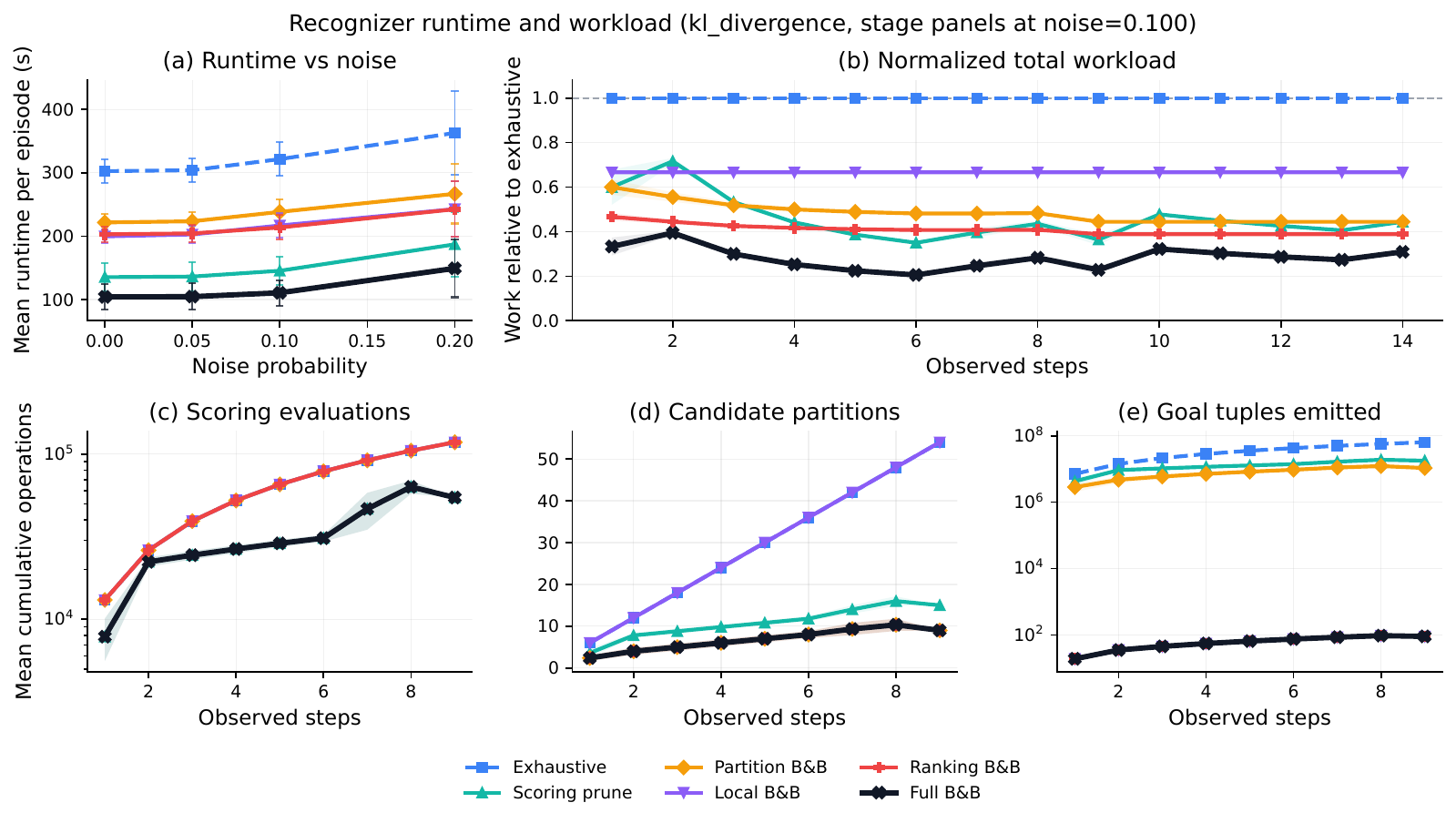}
\caption{
Runtime and workload across the six variants.
Top row: cumulative runtime versus action noise and normalized total work relative to Factorized Exhaustive.
Bottom row: cumulative work at action-noise $0.1$ split into score-table refreshes, partition visits, and goal-tuple emission.
Means over five trajectories; bands show standard errors.
}
\label{fig:runtime_workload}
\end{figure}

Figure~\ref{fig:runtime_workload} reports cumulative runtime and workload across noise levels.
The ordering is stable across noise levels: Full B\&B is always best, and Scoring Prune captures most of the savings.
Ranking B\&B (partition plus local pruning, no scoring skip) is much weaker than Full B\&B; combining ranking pruning with scoring-stage skips is what produces the full gain.
Partition B\&B reduces the number of visited partitions but leaves score updates unchanged because it acts only after scoring has already refreshed a partition.
Local B\&B mostly reduces goal-tuple emission rather than score-table work.
Scoring Prune and Full B\&B are the only variants that materially reduce cumulative scoring evaluations, which explains most of the runtime gain.

Table~\ref{tab:full_vs_exhaustive} isolates Full B\&B against the exhaustive baseline.
We report two regimes rather than four rows because noise $0.0$, $0.05$, and $0.10$ match at the level of abstraction reported here, while noise $0.20$ separates from that regime.
At the final observed step, Full B\&B visits one candidate partition rather than all six and emits $10$ complete goal tuples rather than $7.15$M.
Over the full replay, relative to Factorized Exhaustive, it cuts cumulative score updates by about half at low noise and by $39\%$ at the highest tested noise level.
Cumulative tuple emission falls from $50.1$M--$52.9$M to $85$--$89$ for noise $\le 0.10$, and from $60.1$M to $99$ at noise $0.20$.

\begin{table}[htb]
\centering
\small
\setlength{\tabcolsep}{4pt}
\begin{tabular}{lccccc}
\toprule
Noise & \shortstack{Final partition\\visits} & \shortstack{Final tuple\\emissions} & \shortstack{Cumulative tuple\\emissions} & \shortstack{Cumulative score\\updates} & \shortstack{Runtime\\speedup} \\
\midrule
$\le 0.10$ & $6 \to 1$ & $7.15$M $\to 10$ & $50.1$--$52.9$M $\to 85$--$89$ & $91.7$--$97.0$k $\to 46.7$--$50.2$k & $2.91\times$ \\
$0.20$ & $6 \to 1$ & $7.15$M $\to 10$ & $60.1$M $\to 99$ & $110.1$k $\to 67.3$k & $2.43\times$ \\
\bottomrule
\end{tabular}
\caption{
Full B\&B versus Factorized Exhaustive at the final observed step and over the full replay; arrows point Exhaustive $\to$ Full B\&B, averaged over five trajectories.
``Final partition visits'' counts candidate partitions in the last ranking step.
``Runtime speedup'' is the ratio of Exhaustive to Full B\&B cumulative runtime.
}
\label{tab:full_vs_exhaustive}
\end{table}

Noise mainly weakens the scoring-stage prune: at noise $0.2$, the floor separates later, so the recognizer must refresh more local tables before it can certify that a partition is dominated, which is why cumulative score updates rise for the pruning variants at the highest tested noise level.
Even so, Full B\&B emits at most $99$ cumulative goal tuples and remains the fastest variant at every tested noise level, supporting the central claim that online recognition reduces to maintaining local score tables and applying admissible bounds rather than constructing millions of complete hypotheses.

\FloatBarrier

\section{Related Work}
\label{sec:related_work}

Single-agent model-free recognizers rank candidate goals by behavior-model fit: GRAQL~\citep{amado2022graql} learns one policy or value function per goal, \citet{Fang2023}'s deep-RL recognizer extends this idea to continuous domains, and GRNet~\citep{grNet} is a neural goal recognizer.
These methods target one observed actor, whereas MAGR-BB trains one shared policy conditioned on candidate team and goal, so one network scores every candidate team-goal pair.

The multi-agent recognizers closest to our setting rely on explicit plan, action, or intention structures.
\citet{banerjee2010multiagent} formalize multi-agent plan recognition, prove that the variable-agent search is NP-complete, and solve it with Knuth's Algorithm~X over symbolic plan hypotheses.
\citet{zhuo2012action,zhuo2019recognizing} recognize multi-agent plans from a hand-authored action model.
\citet{argenta2015maprap,argenta2016probabilistic} extend recognition-as-planning to jointly recover teams, goals, and plans.
\citet{michaeldann2023multiagent} combine online goal recognition with MCTS-based intention progression.
Together, these approaches use plan libraries, hand-authored action models, recognition-as-planning encodings, or BDI intention structures; MAGR-BB instead scores team-goal hypotheses with a learned multi-agent policy.

Coalition-structure generation~\citep{rahwan2007coalition} inspires our partition layer, but it optimizes coalition utilities rather than goal hypotheses; our bounds come from additive learned-policy scores.
Planning-based recognizers filter candidate goals with costs~\citep{ramirez2009plan,ramirez2010probabilistic} or landmarks~\citep{pereira2020landmark}; MAGR-BB instead reuses local team-goal scores across prefixes.

\section{Conclusion}
\label{sec:conclusion}

We introduced MAGR-BB, a model-free recognizer for joint team and goal inference.
MAGR-BB scores local team-goal pairs with one conditioned Transformer and combines those scores with admissible B\&B under the non-competitive score condition.
On the controlled disjoint-workspace Blocksworld benchmark, it matches exhaustive top-$1$ decisions at every observed step and the final top-$10$ list while reducing online runtime by $2.43$--$2.91$.
The disjoint-workspace benchmark is a proof-of-concept instantiation of the score condition, not a requirement built into MAGR-BB.
Future work should test shared resources, partial observations, structured noise, and larger teams and goals.

\bibliographystyle{plainnat}
\bibliography{references}

\end{document}